\begin{document}

\newcommand{\ie}{\emph{i.e.}}
\newcommand{\suchthat}{such that}
\newcommand{\trivpoint}{\item$\bullet$~}
\newcommand{\itemif}{\item(\emph{if}).~}
\newcommand{\itemonlyif}{\item(\emph{only if}).~}

\newcommand{\pmast}{\textup{MAST}}
\newcommand{\past}{\textup{AST}}
\newcommand{\pastD}{\textup{AST}$[D]$}
\newcommand{\pct}{\textup{CT}}
\newcommand{\pctD}{\textup{CT}$[2^{\lfloor D \quotient 2 \rfloor}]$}
\newcommand{\pctE}{\textup{CT}$[D]$}
\newcommand{\pmct}{\textup{MCT}}

\newcommand{\ppisqk}{\textup{PIS}$_{p + 1}[k]$}
\newcommand{\ppisp}{\textup{PIS}$_p$}
\newcommand{\ppispk}{\textup{PIS}$_p[k]$}
\newcommand{\ppis}{\textup{PIS}$_1$}
\newcommand{\ppisk}{\textup{PIS}$_1[k]$}
\newcommand{\pis}{\textup{IS}}
\newcommand{\pisk}{\textup{IS}$[k]$}
\newcommand{\ppisdeux}{\textup{PIS}$_2$}
\newcommand{\ppisdeuxk}{\textup{PIS}$_2[k]$}

\newcommand{\N}{\mathbb{N}}
\newcommand{\seg}[2]{[#1, #2]}
\newcommand{\restr}[2]{#1  {\restriction} #2}
\newcommand{\setnabot}{-}
\newcommand{\ceil}[1]{\left\lceil #1 \right\rceil}
\newcommand{\quotient}{\mathbin{/}}

\newenvironment{proofqed}{\begin{proof}}{\end{proof}}

\newcommand{\defdecpb}[3]{
\begin{trivlist}
\item\hspace{\parindent}{\bf Name:} #1
\item\hspace{\parindent}{\bf Instance:} #2
\item\hspace{\parindent}{\bf Question:} #3 
\end{trivlist}
}

\newcommand{\calc}{\mathcal{C}}
\newcommand{\calt}{\mathcal{T}}

\newcommand{\AST}{agreement subtree}

\newcommand{\lr}[1]{\left\langle #1 \right\rangle}
\newcommand{\feuille}[1]{L(#1)}

\renewcommand{\tilde}{\widetilde}

\newcommand{\WI}{\mathrm{W}[1]}
\newcommand{\NP}{\mathrm{NP}}
\newcommand{\SNP}{\mathrm{SNP}}
\newcommand{\SE}{\mathrm{SE}}
\newcommand{\lfptred}{linearly FPT-reduce}

\newtheorem{lemma}{Lemma}
\newtheorem{theorem}{Theorem}
\newtheorem{definition}{Definition}
\newtheorem{remark}{Remark}
\newtheorem{property}{Property}

%%%%%%%%%%%%%%%%%%%%%%%%%%%%%%%%%%%%%%%%%%%%%%%%%
%%%%%%%%%%%%%%%%%%%%%%%%%%%%%%%%%%%%%%%%%%%%%%%%% MACROS 
%%%%%%%%%%%%%%%%%%%%%%%%%%%%%%%%%%%%%%%%%%%%%%%%%

\sloppy 
\title{Solving the Maximum Agreement SubTree and the Maximum Compatible Tree problems on many bounded degree trees\thanks{The paper is a revised version of the conference paper \cite{GuillemotN06}.}}

\author{Sylvain~Guillemot  \and Fran\c{c}ois~Nicolas\thanks{Corresponding author. E-mail address: \texttt{nicolas@cs.helsinki.fi}.}}

%\authorrunning{Sylvain~Guillemot \and Fran\c{c}ois~Nicolas}
%\titlerunning{Solving the \pmast{} and \pmct{} problems on many bounded degree trees}

%\institute{LIRMM -- 161, rue Ada -- 34392 Montpellier Cedex 5 -- France \\
%\email{\{sguillem,nicolas\}@lirmm.fr} }

\maketitle

\begin{abstract}
Given a set of leaf-labeled trees with identical leaf sets, 
the well-known \textsc{Maximum Agreement SubTree} problem  (\pmast) consists of finding a subtree homeomorphically included in all input trees and with the largest number of leaves.
Its variant called \textsc{Maximum Compatible Tree} (\pmct) is less stringent, 
as it allows  the input trees to be refined. 
Both problems are of particular interest in computational biology,
where trees encountered have often small degrees.

In this paper,
 we study the parameterized complexity of \pmast{} and \pmct{} with respect to the maximum degree, 
 denoted by $D$,
 of the input trees.
Although \pmast{} is polynomial for bounded $D$ \cite{AK97,Fal95,B97}, 
we show that the problem is $\WI$-hard with respect to parameter $D$.
Moreover, 
relying on recent advances in parameterized complexity we obtain a tight lower bound:
while \pmast{} can be solved in $O(N^{O(D)})$ time where $N$ denotes the input length,
we show that an $O(N^{o(D)})$ bound is not achievable, unless $\SNP \subseteq \SE$.
We also show that \pmct{} is $\WI$-hard with respect to $D$,  
and that \pmct{} cannot be solved in  
$O\big(N^{o(2^{D \quotient 2})}\big)$ time,
unless $\SNP \subseteq \SE$.  
\end{abstract} 

\section{Introduction}

Throughout this paper, 
$\N$ denotes the set of non-negative integers and, 
for all $n \in \N$, the set $\{ 1, 2, \ldots, n \}$ is denoted by $\seg{1}{n}$.

\subsection{Agreement subtree and compatible tree} \label{sec:agr-comp}

\subsubsection{Trees}

All trees considered in this paper are \emph{rooted evolutionary trees}, 
\ie{} trees representing the evolutionary history of a set of species.
Such trees  are unordered, 
bijectively leaf-labeled 
and 
their internal nodes have at least two children each.
Labels are species under study and the branching pattern of the tree describes the way in which speciation events lead from ancestral species to more recent ones.

\paragraph{Leaf labels.}
For convenience, 
we will identify the leaves with their labels when the tree is understood.
Let $T$ be a (rooted evolutionary) tree.
The leaf label set of $T$ is denoted by $\feuille{T}$.
We say that $T$ is a tree \emph{on} $\feuille{T}$.
The \emph{size} of a tree is defined as the cardinality of its leaf set. 

\paragraph{Degree.}
The \emph{(out-)degree} of a node in $T$ is the number of its children.
The \emph{maximum degree} of $T$, 
denoted by $\Delta(T)$,
is the largest degree over all nodes of $T$.

\paragraph{Parenthetical notation.}
Parenthetical notation is a convenient way to represent evolutionary trees.
Given $d$ non-empty trees $T_1$, $T_2$, \ldots, $T_d$ with pairwise disjoint leaf sets, $\lr{T_1, T_2, \ldots, T_d}$ denotes the tree whose root has degree $d$ and admits as child subtrees $T_1$, $T_2$, \ldots, $T_d$.

\paragraph{Restriction.}
For each subset  $X  \subseteq \feuille{T}$,  the \emph{(topological) restriction} of $T$ to $X$ is denoted by $\restr{T}{X}$.
Colloquially,  
$\restr{T}{X}$ is the tree on $X$ displaying the branching information of $T$ relevant to $X$.
Restriction is formally defined by induction as follows.

\begin{trivlist}
\item[](\emph{Basis}).
For each leaf-tree $\ell$, 
$\restr{\ell}{\{ \ell \}} = \ell$ 
and 
$\restr{\ell}{\emptyset}$ is the empty tree.

\item[](\emph{Inductive step}).
Assume that $T$ is of size at least two:  $T = \lr{T_1, T_2, \ldots, T_d}$ with $d \ge 2$.
If $X$ is a subset of $\feuille{T_i}$ for some $i \in \seg{1}{d}$ then $\restr{T}{X} = \restr{T_i}{X}$,
otherwise, $\restr{T}{X}$ is the tree on $X$ whose root admits as child subtrees all non-empty trees of the form $\restr{T_i}{(\feuille{T_i} \cap X)}$ with $i \in \seg{1}{d}$.
\end{trivlist}

\subsubsection{\pmast{} and \pmct}

Let $\calt$ be a collection of trees on a common leaf set.

\paragraph{Agreement subtree.}
An \emph{\AST} of $\calt$ is a tree $T$ \suchthat, 
$\forall T_i \in \calt$, 
$T = \restr{T_i}{\feuille{T}}$.
The \textsc{Maximum Agreement SubTree} problem (\pmast) consists of finding an \AST{} of $\calt$ of largest size.
In phylogenetics, 
the maximum size of an \AST{} of $\calt$ is a useful measure of the similarity of the trees in $\calt$  \cite{FG85}.
From the point of view of the \pmast{} problem, 
a node $\nu$ of degree $d$ in an input evolutionary tree represents the simultaneous creation of $d$ descendant from the ancestral species represented by $\nu$.
As such events are rare if $d$ is greater than two, 
the trees that people want to calculate maximum \AST{} for have usually small maximum degrees.

\paragraph{Compatible tree.}
Let $T$ and $T'$ be two trees on a common leaf set.
We say that $T$ \emph{refines} $T'$ if $T'$ can be obtained by collapsing a selection of edges of $T$. 
A tree \emph{compatible} with $\calt$ is a tree $T$ \suchthat, 
$\forall T_i \in \calt$, 
$T$ refines $\restr{T_i}{\feuille{T}}$.
Obviously, agreement implies compatibility.
The converse is usually false for collections including at least a non-binary tree.
The \textsc{Maximum Compatible Tree} problem (\pmct) consists of finding a tree of largest size compatible with $\calt$.
The \pmct{} problem is more relevant than the \pmast{} problem when comparing  reconstructed evolutionary trees \cite{HS96,GW01}.
From the point of view of \pmct,
a non-binary node is usually interpreted as a lack of decision with respect to the  relative grouping of its children rather than as a multi-speciation event. 
As data sequences are getting longer and phylogenetic methods more accurate, 
the maximum degree of indecision in reconstructed trees is expected to decrease to a small constant.

\subsubsection{Previous results}
\pmast{} is polynomial on two trees (see \cite{Kao01} for the latest algorithm)
but becomes $\NP$-hard on three input trees \cite{AK97}.
\pmct{} is $\NP$-hard on two trees even if one of them is of maximum degree three \cite{Hein96} (see also \cite{HS96}).

Consider now the general setting of an arbitrary number, denoted by $k$, of input trees.
Let $\calt =  \left\{ T_1, T_2, \ldots T_k \right\}$ 
be the input collection.
Let $n$ be the cardinality of the common leaf set of the $T_i$'s, 
let $d := \min_{i = 1}^k \Delta(T_i)$ 
and 
let $D := \max_{i = 1}^k \Delta(T_i)$.
Above,
we argued about the relevance of solving \pmast{} and \pmct{} on bounded maximum degree  trees.
Three different algorithms were proposed to solve \pmast{} in polynomial time for bounded $d$ \cite{AK97,Fal95,B97}.
The fastest of these algorithms \cite{Fal95,B97} run in $O(n^d + kn^3)$ time.

Besides, 
\pmct{} can be solved in $O(4^{k D}n^k)$ time \cite{GW01}. 
Hence, 
for bounded $k$,
\pmct{} is FPT in $D$.
The same result holds for \pmast.
Assume that a  bound $p$ on the number of leaves to be removed from the input set of leaves so that the input trees agree, resp.~are compatible, is added to  the input. 
Then \pmast, resp. \pmct, can be solved in $O\big(\min\{3^p k n, \alpha^p+kn^3\}\big)$ time, 
where $\alpha$ is a constant less than three  \cite{BerryN06}. 
Thus, both problems are FPT with respect to $p$.

\subsubsection{Our contribution} \label{sec:contrib}
We prove that both \pmast{} and \pmct{} are $\WI$-hard with respect to $D$.
Furthermore, 
let $\varphi : \N \to \N$ be an arbitrary recursive function.
Note that the input $\calt$ is of size $\widetilde O(kn)$.
We prove the following.
\begin{enumerate}[$(R1)$.]
\item 
\pmast{} cannot be solved in $\varphi(D){(kn)}^{o(D)}$ time, 
unless $\SNP \subseteq \SE$.
\item 
\pmct{} cannot be solved in $\varphi(D){(kn)}^{o(2^{D \quotient 2})}$ time,
unless $\SNP \subseteq \SE$. 
\end{enumerate}
Recall that $\SE$ \cite{ImpagliazzoPZ01} is the class of problems solvable in subexponential time and that $\SNP$ \cite{PapadimitriouY91} contains many $\NP$-hard problems. 
Hence,
 the inclusion $\SNP \subseteq \SE$ is unlikely.
According to result~$(R1)$, 
the $O(n^d + kn^3)$ time algorithms for \pmast{} \cite{Fal95,B97}  are somehow optimum.
Results $(R1)$ and $(R2)$ are proved in Sections~\ref{sec:MASThard} and~\ref{sec:MCThard}, respectively.

\subsection{Parameterized complexity} \label{sec:par-cpx}

In order to clearly prove our intractability results,
we recall the main concepts of parameterized complexity \cite{DowFel99}, 
together with some recent results.
We also introduce the notions of linear FPT-reduction and weak fixed-parameter tractability.    

Let $\Sigma$ be a finite alphabet. 
The set of all finite words over $\Sigma$ is denoted by $\Sigma^\star$, 
and for each word $x \in \Sigma^\star$,
$| x |$ denotes the \emph{length} of $x$.
A \emph{parameterized (decision) problem} is a subset $P \subseteq \N \times \Sigma^\star$.
Each element of $(k, x) \in \N \times \Sigma^\star$ is an \emph{instance} of $P$, $k$ standing for the \emph{parameter}.
A \emph{yes-instance} of $P$ is an element of $P$ 
and 
a \emph{no-instance} of $P$ is an element of $(\N \times \Sigma^\star) \setnabot P$.

\subsubsection{Fixed-parameter tractability and weak fixed-parameter tractability}

The parameterized problem $P$ is called \emph{fixed-parameter tractable} (FPT), 
 if 
there exist an algorithm $A$ and a recursive function $\varphi : \N \to \N$ such that, 
for each $(k, x) \in \N \times \Sigma^\star$,
 $A$ decides whether  $(k, x)$ is a yes-instance of $P$ in $\varphi(k) {| x |}^{O(1)}$ time.
The parameterized problem $P$ is called \emph{weakly fixed-parameter tractable} (WFPT) if 
there exist an algorithm $A$ and a recursive function $\varphi : \N \to \N$ such that, 
for each $(k, x) \in \N \times \Sigma^\star$,
 $A$ decides whether  $(k, x)$ is a yes-instance of $P$ in 
$\varphi(k) {| x |}^{o(k)}$ time.

\subsubsection{FPT-reduction and linear FPT-reduction}

Let $P$, $Q \subseteq  \N \times \Sigma^\star$ be two parameterized problems and let $f :\N \times \Sigma^\star \to \N \times \Sigma^\star$.

We say that $f$ is a (many-to-one, strongly uniform) \emph{FPT-reduction} from $P$ to $Q$ if there exist 
recursive functions $g : \N \times \Sigma^\star \to \Sigma^\star$ 
and
$\varphi$, $\gamma :  \N \to \N$ satisfying,
for all $(k, x) \in \N \times \Sigma^\star$:
\begin{enumerate}
\item 
$f(k, x)$ is computable in $\varphi(k) {| x |}^{O(1)}$ time,
\item 
$f(k, x) \in Q$ if and only if $(k, x) \in P$, and 
\item 
$f(k, x)  = (\gamma(k), g(k, x))$.
\end{enumerate}
Moreover, 
if $\gamma$ is at most linearly increasing (\ie{} if $\gamma(k) = O(k)$ as $k \to \infty$) then we say that $f$ is a \emph{linear FPT-reduction} from $P$ to $Q$.

\smallskip 

FPT-reductions compose, and preserve fixed-parameter tractability.
Linear FPT-reductions compose, and preserve weak fixed-parameter tractability.
Note that our notion of linear FPT-reduction is slightly different from the one introduced by 
Chen, Huang, Kanj and Xia \cite{ChenHKX06}.

\subsubsection{Independent set}

Formally,
an (undirected) \emph{graph} is an ordered pair $G =(V, E)$ where $V$ is a finite set of \emph{vertices} and where $E$ a set of $2$-element subsets of $V$.
The elements of $E$ are the \emph{edges} of $G$. 
The elements of an edge are called its \emph{endpoints}.
An \emph{independent set} of $G$ is a subset $I \subseteq V$ \suchthat, 
for each edge $e \in E$, 
at least one of its endpoint is not in $I$.
The problem of finding an independent set of maximum cardinality in a given input graph plays a central role in computational complexity theory.
\defdecpb{\textsc{Independent Set} (\pis).}
{A positive integer $k$ and a graph $G = (V, E)$.}
{Is there an independent set of $G$ with cardinality $k$?} 
The version of \pis{} parameterized by $k$ is denoted  by \pisk.
This problem is not believed to be FPT as it is complete under FPT-reductions for the class $\WI$ \cite{DowFel99}.
Moreover, \pisk{} is not WFPT either, 
unless $\SNP \subseteq \SE$ \cite[Theorem~5.5]{ChenHKX06}.

\section{Parameterized complexity of \pmast}
\label{sec:MASThard}

The decision version of the  \pmast{} problem is:
\defdecpb{\textsc{Agreement SubTree} (\past).}
{An integer $q \ge 1$ and a finite collection $\calt$ of trees on a common leaf set.}
{Is  there an \AST{} of $\calt$ with  size $q$?}
We denote by \pastD{} the version of \past{} parameterized by $D := \max_{T \in \calt} 
\Delta(T)$.  
In this section, we prove: that \pastD{} is $\WI$-hard, and Result~$(R1)$ stated in Section~\ref{sec:contrib}.
According to Section~\ref{sec:par-cpx}, 
it is sufficient to present a linear FPT-reduction from  \pisk{} to \pastD.

For each integer $p \ge 1$, 
we introduce the following auxiliary problem:
\defdecpb{\textsc{Partitioned Independent Set with multiplicity} $p$ (\ppisp).}
{An integer $k \ge 1$, a graph $G = (V, E)$, and $k$ independent sets  $V_1$, $V_2$, \ldots, $V_k$  of $G$ of equal cardinality partitioning $V$.}
{Is there an independent set $I$ of $G$ \suchthat{} $I \cap V_i$ has cardinality $p$ for all $i \in \seg{1}{k}$?}
For each instance $(k, G, V_1, V_2, \ldots, V_k)$ of \ppisp{}, 
the graph $G$ is $k$-colorable: 
the $V_i$'s yield a $k$-coloring of $G$.
The version of \ppisp{} parameterized by $k$ is denoted by \ppispk.
We reduce \pisk{}  to \pastD{} going through \ppisk.
In the next section, 
the decision version of \pmct{} is reduced to \pis{} going through \ppisdeux.

\begin{lemma} \label{lem:PIS}
\pisk{} \lfptred{}s to  \ppisk.
\end{lemma}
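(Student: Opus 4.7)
The plan is to give the standard ``color-padding'' construction that turns an arbitrary \pis{} instance into a partitioned one with multiplicity $1$, while preserving the parameter exactly.

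Given an instance $(k, G)$ of \pisk{} with $G = (V, E)$, I would build an instance $(k, G', V_1, V_2, \ldots, V_k)$ of \ppisk{} as follows. Take $k$ disjoint copies of the vertex set: set $V_i := V \times \{i\}$ for each $i \in \seg{1}{k}$, let $V' := \bigcup_{i=1}^k V_i$, and define the edge set $E'$ of $G'$ to consist of all pairs $\{(u,i),(v,j)\}$ with $i \neq j$ and either $u = v$ or $\{u,v\} \in E$. By construction, each class $V_i$ contains no edge of $G'$ (all edges go between distinct classes), so each $V_i$ is an independent set of $G'$; the classes have equal cardinality $|V|$ and partition $V'$, so $(k, G', V_1, \ldots, V_k)$ is a legitimate instance of \ppisk{}. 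The mapping is clearly computable in polynomial time, and the parameter is sent from $k$ to $k$, which is trivially linear.

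The correctness argument is a direct unfolding. For the forward direction, if $I = \{v_1, v_2, \ldots, v_k\} \subseteq V$ is an independent set of $G$ of size $k$ (with the $v_i$'s pairwise distinct), then $I' := \{(v_1,1), (v_2,2), \ldots, (v_k,k)\}$ meets each $V_i$ in exactly one element; and for $i \neq j$ we have $v_i \neq v_j$ and $\{v_i, v_j\} \notin E$, so no edge of $G'$ lies inside $I'$. For the converse, any independent set $I'$ of $G'$ with $|I' \cap V_i| = 1$ for all $i$ has the form $\{(v_1,1), \ldots, (v_k,k)\}$; the forbidden edges corresponding to $u=v$ guarantee that the $v_i$'s are pairwise distinct, and the forbidden edges corresponding to $\{u,v\} \in E$ guarantee that $\{v_1, \ldots, v_k\}$ is independent in $G$.

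There is no real obstacle here: the construction is a one-pass gadget and the equivalence falls out of the definition of the edges of $G'$. The only point that requires a small amount of care is verifying the three structural conditions required by \ppis{} (the $V_i$ partition $V'$, are of equal cardinality, and are independent in $G'$), all of which are immediate from the definition. Since the parameter function is the identity $\gamma(k) = k$, this is a linear FPT-reduction in the sense of Section~\ref{sec:par-cpx}.
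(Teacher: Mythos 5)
Your construction is exactly the one the paper uses (the Pietrzak-style reduction with $V_i := V \times \{i\}$ and edges between distinct copies for $u = v$ or $\{u,v\} \in E$), and your correctness argument matches the paper's in both directions. The proof is correct and essentially identical to the paper's.
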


\begin{proof}
Reduce \pisk{} to \ppisk{} in the same way as Pietrzak  reduces \textsc{Clique} to \textsc{Partitioned Clique}  \cite{Pietrzak-03}.
Each instance $(k, G)$ of \pis{} is transformed into an instance $(k, \tilde G, \tilde V_1, \tilde V_2, \ldots, \tilde V_k)$ of \ppis{} where $\tilde G$ and the $\tilde V_i$'s are as follows.

Let $V$ denote the vertex set of $G$.
$\tilde G$ is the graph on $V \times \seg{1}{k}$ whose edge set is given by: 
for all $(u, i)$, $(v, j) \in V \times \seg{1}{k}$, 
$\left\{ (u, i), (v, j) \right\}$ is an edge of $\tilde G$ 
if and only if 
$i \ne j$ and
 either  $\{ u, v \}$ is an edge of $G$ or $u = v$.
For each $i \in \seg{1}{k}$, 
$\tilde V_i$ is defined as $\tilde V_i := V \times \{ i \}$.
It is clear that $(k, \tilde G, \tilde V_1, \tilde V_2, \ldots, \tilde V_k)$ is an instance of \ppisk{} computable  from $(k, G)$ in polynomial time.
It remains to check that 
$(k, G)$ is a yes-instance of \pis{}
if and only if  
$(k, \tilde G, \tilde V_1, \tilde V_2, \ldots, \tilde V_k)$  is a yes-instance of \ppis.

\begin{trivlist}
\itemif{}
Assume there exists an independent set $\tilde I$ of $\tilde G$ \suchthat{} $\tilde I \cap \tilde V_i$ is a singleton for all $i \in \seg{1}{k}$.
For each $i \in \seg{1}{k}$, 
let $v_i \in V_i$ be \suchthat{} $\tilde I \cap \tilde V_i= \{ (v_i, i) \}$.
The set $I := \{ v_1, v_2, \ldots, v_k \}$ is an independent set of $G$ with cardinality $k$.

\itemonlyif{}
Conversely, assume that there exists an independent set $I$ of $G$ with cardinality $k$.
Write $I$ in the form $I = \left\{ v_1, v_2, \ldots, v_k \right\}$.
The set $\tilde I := \left\{ (v_1, 1), (v_2, 2), \ldots, (v_k, k) \right\}$ is an independent set of $\tilde G$ and $\tilde I \cap \tilde V_i = \{ (v_i, i) \}$ is a singleton for all $i \in \seg{1}{k}$.
\end{trivlist}
\end{proof}

In order to clearly prove Theorem~\ref{thm:pis-MAST}, 
we first introduce some useful vocabulary.
 
\begin{definition}
Let $T$ and $T'$ be two trees and let $L$ be a subset of  $\feuille{T} \cap \feuille{T'}$.
We say that $T$ and $T'$ \emph{disagree} on $L$ if $\restr{T}{L}$ and $\restr{T'}{L}$ are distinct.
\end{definition}
Assume that $\feuille{T} \subseteq \feuille{T'}$.
If there exists a subset $L \subseteq \feuille{T}$ \suchthat{} $T$ and $T'$ disagree on $L$ then $T$ is not a restriction of $T'$.
Conversely, 
if $T$ is  not a  restriction of $T'$ 
then $T$ and $T'$ disagree on some $3$-element subset of $\feuille{T}$ \cite{B97}. 
This explains the central role played by $3$-leaf sets of disagreement in the proofs of Lemmas~\ref{lem:MASTcontrol} and~\ref{lem:MASTselection} below.
Note that given three distinct leaf labels $a$, $b$ and $c$, 
there are exactly four distinct trees on $\{ a, b, c \}$: 
the non-binary tree $\lr{a, b, c}$,
and the three binary trees $\lr{\lr{b, c}, a}$, $\lr{\lr{a, c}, b}$ and $\lr{\lr{a, c}, b}$.

\begin{theorem} \label{thm:pis-MAST}
\pisk{} \lfptred{}s to \pastD.
\end{theorem}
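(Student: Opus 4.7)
By Lemma~\ref{lem:PIS}, it suffices to exhibit a linear FPT-reduction from \ppisk{} to \pastD. Given an instance $I = (k, G, V_1, \ldots, V_k)$ of \ppisk{} with $|V_i| = n$, I will construct in polynomial time a collection $\calt$ of trees on a common leaf set of size $O(kn)$ together with a target $q$, such that every $T \in \calt$ satisfies $\Delta(T) = O(k)$ and $\calt$ admits an \AST{} of size $q$ if and only if $I$ admits a solution. Since linear FPT-reductions compose and the parameter map $k \mapsto O(k)$ is linear, this yields the desired reduction.

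The construction will use three kinds of ingredients. First, to each vertex $v \in V_i$ I attach a \emph{block} $B_{i,v}$ of leaves of constant size $b$, so that the total leaf set is the disjoint union of all blocks. Second, for each color $i$ I include \emph{selection gadgets}: trees in which the $k$ color classes sit under a bounded-degree top-structure, arranged so that any large \AST{} can retain at most one block $B_{i,v}$ per color $i$. The delicate design point is to spread the ``one-out-of-$n$'' choice over a logarithmic-depth subtree of bounded degree rather than under a single node of degree $n$, so that the maximum degree stays $O(k)$. Third, for each edge $e = \{u,v\} \in E$ with $u \in V_i$ and $v \in V_j$ I include an \emph{edge gadget}: a tree that, using the 3-leaf characterization of disagreement recalled just before the theorem, differs from a reference tree on some triple contained in $B_{i,u} \cup B_{j,v}$, thereby forbidding $B_{i,u}$ and $B_{j,v}$ from being simultaneously retained.

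With target $q := bk$, the \emph{if} direction is direct: any independent set $\{v_1, \ldots, v_k\}$ with $v_i \in V_i$ yields an \AST{} of size $bk$ as the common restriction to $\bigcup_i B_{i,v_i}$, since only one block per color is kept (so selection gadgets cannot disagree on it) and no edge is spanned (so no edge gadget is triggered). The converse is where the two subsequent Lemmas~\ref{lem:MASTcontrol} and~\ref{lem:MASTselection} come in: a \emph{control lemma} showing that any \AST{} of size $q$ must retain exactly one full block per color, and a \emph{selection lemma} showing that the corresponding vertex choices form an independent set of $G$. Both reduce, via the 3-leaf disagreement principle, to examining all forbidden triples introduced by the gadgets.

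The main obstacle is the selection gadget: forcing a single block per color in every size-$q$ \AST{} while keeping the maximum degree bounded by $O(k)$ rather than $O(n)$. I expect to encode the $n$-way choice by a small collection of mutually disagreeing bounded-degree trees of depth roughly $\log n$, whose 3-leaf disagreements collectively rule out keeping more than one block per color. Once this combinatorial core is in place, assembling the edge gadgets and verifying the equivalence through the control and selection lemmas is essentially routine, albeit technical.
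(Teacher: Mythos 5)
There is a genuine gap. Your proposal correctly identifies the overall architecture (route through \ppisk{} via Lemma~\ref{lem:PIS}, a control/selection decomposition, correctness via the $3$-leaf disagreement principle), but the heart of the reduction --- the actual gadget that forces any size-$q$ \AST{} to pick exactly one vertex per color class while keeping the maximum degree $O(k)$ --- is never constructed. You explicitly flag it as ``the main obstacle'' and then only say you ``expect to encode the $n$-way choice by a small collection of mutually disagreeing bounded-degree trees of depth roughly $\log n$.'' That expectation is precisely what needs to be discharged, and without it neither your control lemma nor your selection lemma can be stated, let alone proved. The auxiliary choices you do make (constant-size leaf blocks $B_{i,v}$ per vertex, target $q = bk$) add further unverified burden: you would have to show that a size-$bk$ \AST{} retains $k$ \emph{full} blocks rather than fragments of many blocks, which is not obviously true and is not addressed.

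For comparison, the paper's construction resolves the obstacle quite differently and more simply. The leaves are the vertices themselves (no blocks), each $V_i$ carries an arbitrary \emph{binary} tree $B_i$ (so its internal degrees are $2$ no matter how large $n$ is --- no logarithmic-depth choice structure is needed), and $C = \lr{B_1, \ldots, B_k}$ has root degree $k$. The key idea you are missing is the quadratic family of control trees $C_{a,b}$, one for every pair of distinct vertices, obtained from $C$ by re-grafting $a$ and $b$ as children of the root: any \AST{} of size $k$ of this family must have height one (a cherry $\lr{\lr{a,b},c}$ would disagree with $C_{a,b}$ on $\{a,b,c\}$), and then $C$ itself forces the $k$ retained leaves into distinct classes because each $B_i$ is binary. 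The selection trees $S_e$ then exclude both endpoints of an edge from being kept simultaneously, with $q = k$ and maximum degree $k+2$. Your outline is compatible in spirit with this, but as written it is a plan for a proof rather than a proof.
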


\begin{proof}
According to Lemma~\ref{lem:PIS}, it suffices to \lfptred{}  \ppisk{} to \pastD.
Each instance  $(k, G, V_1, V_2, \ldots, V_k)$ of \ppis{} is transformed into an instance 
$(q, \calt)$ of \past{}  where $q := k$ and where $\calt$ is a collection of trees described below.
Without loss of generality, 
we can assume that all $V_i$'s ($i \in \seg{1}{k}$) have cardinality at least three and that $k$ is at least three.

\paragraph{The collection $\calt$.}
We construct a  collection $\calt$ of gadget trees whose leaf set is the vertex set $V :=  V_1 \cup V_2 \cup \cdots \cup V_k$ of $G$.

For each $i \in \seg{1}{k}$, 
compute an arbitrary \emph{binary} tree $B_i$ on $V_i$.
The tree on $V$ whose root admits $B_1$, $B_2$, \ldots, $B_k$ as child subtrees is denoted by $C$: $C = \lr{B_1, B_2, \ldots, B_k}$.
Every tree of $\calt$ is obtained by modifying the positions of exactly two leaves of $C$.

For all $a$, $b \in V$ with $a \ne b$, 
$C_{a, b}$ denotes the tree on $V$ obtained from $C$, 
by first removing its leaves $a$ and $b$,
and then re-grafting both of them as new children of the root.
Formally, 
$C_{a, b}$ is the tree
$$  
\lr{
\restr{B_1}{(V_1 \setnabot \{ a, b \})}, 
\restr{B_2}{(V_2 \setnabot \{ a, b \})}, 
\ldots, 
\restr{B_k}{(V_k \setnabot \{ a, b \}), a, b}
} \, .
$$
We set $\calc := \{ C \} \cup \{ C_{a, b} : a, b \in V, a \ne b \}$.

\begin{remark}
There exist at most two indices $i$ \suchthat{} $\restr{B_i}{(V_i \setnabot \{ a, b \})}$ is distinct from $B_i$, 
and since $V_i$ has cardinality at least three, 
 $\restr{B_i}{(V_i \setnabot \{ a, b \})}$ is a non-empty tree for all $i$.
\end{remark}

Let $E$ denote the edge set of $G$: 
$G = (V, E)$.
For each edge $e = \{ a, b \} \in E$,  
$S_e$ denotes the tree on $V$ obtained from $C$,
by first removing its leaves $a$ and $b$, 
and then re-grafting $\lr{a, b}$ as a new child of the root.
Formally, $S_e$ is the tree
$$
\lr{
\restr{B_1}{(V_1 \setnabot e)}, 
\restr{B_2}{(V_2 \setnabot e)}, 
\ldots, 
\restr{B_k}{(V_k \setnabot e)}, 
\lr{a, b}
} \, .
$$
The collection of trees $\calt$ is defined as $\calt := \calc \cup  \left\{ S_e : e \in E \right\}$ (see Figure~\ref{fig:gadgetMAST}):
$\calc$ is the \emph{control component} of our gadget
and 
the $S_e$'s ($e \in E$) are its \emph{selection components}.
\begin{figure}
\psfrag{a}[][]{$\mathtt{a}$}
\psfrag{b}[][]{$\mathtt{b}$}
\psfrag{c}[][]{$\mathtt{c}$}
\psfrag{d}[][]{$\mathtt{d}$}
\psfrag{e}[][]{$\mathtt{e}$}
\psfrag{f}[][]{$\mathtt{f}$}
\psfrag{g}[][]{$\mathtt{g}$}
\psfrag{h}[][]{$\mathtt{h}$}
\psfrag{i}[][]{$\mathtt{i}$}
\psfrag{j}[][]{$\mathtt{j}$}
\psfrag{k}[][]{$\mathtt{k}$}
\psfrag{l}[][]{$\mathtt{l}$}
\psfrag{BUN}[][B]{$B_1$}
\psfrag{BDEUX}[][B]{$B_2$}
\psfrag{BTROIS}[][B]{$B_3$}
\psfrag{C}[][B]{$C$}
\psfrag{Cki}[][B]{$C_{\mathtt{i}, \mathrm{k}}$}
\psfrag{Scf}[][B]{$S_{\{ \mathtt{c}, \mathtt{f} \}}$}
\centering
\epsfig{figure=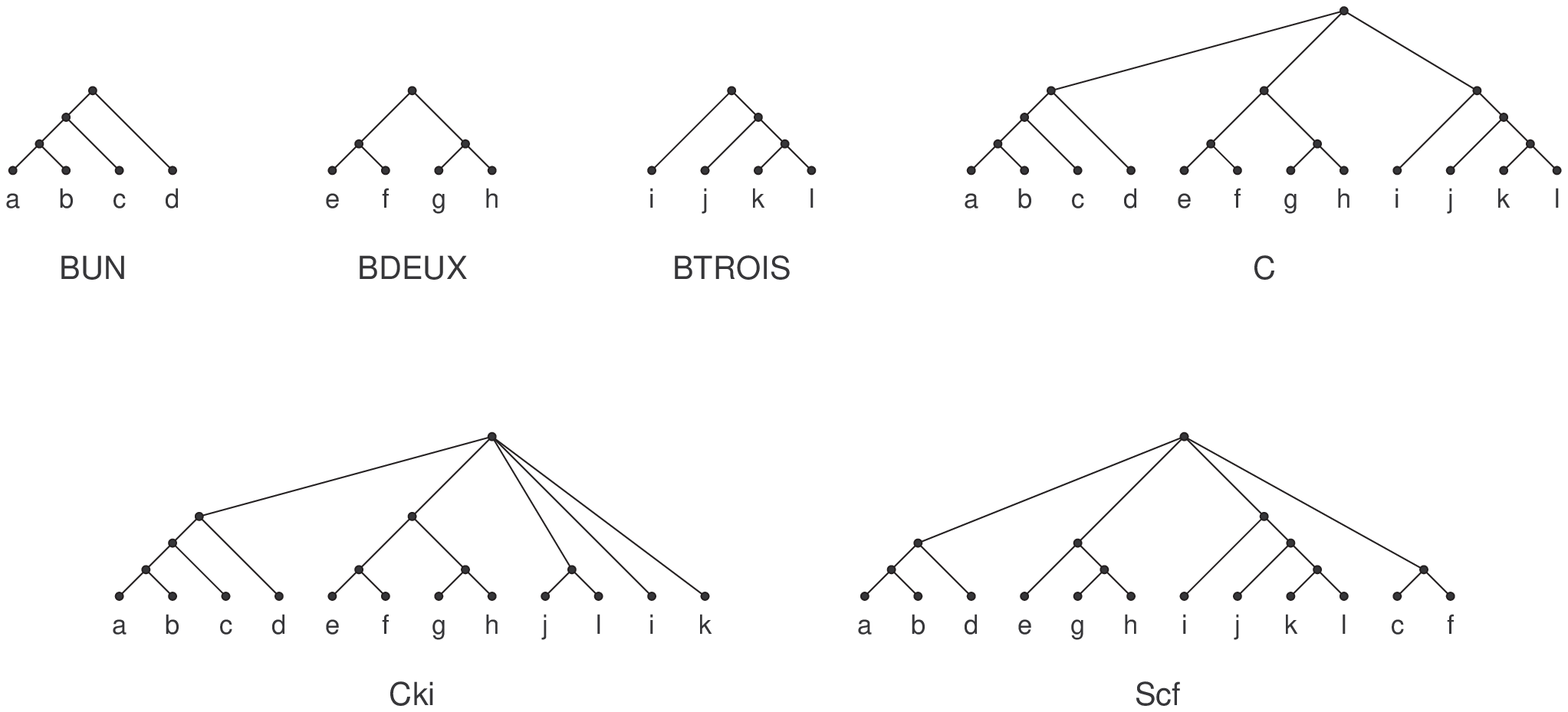,width=\linewidth}
\caption{Some of the gadget trees encoding an instance $(k, G, V_1, V_2, \ldots, V_k)$ of \ppisk{}
where
$k = 3$, 
$V_1 = \{ \mathtt{a}, \mathtt{b}, \mathtt{c}, \mathtt{d} \}$,
$V_2 = \{ \mathtt{e}, \mathtt{f}, \mathtt{g}, \mathtt{h} \}$,
$V_3 = \{ \mathtt{i}, \mathtt{j}, \mathtt{k}, \mathtt{l} \}$
and 
$\{ \mathtt{c}, \mathtt{f} \}$ is an edge of $G$.
\label{fig:gadgetMAST}
}
\end{figure}

\begin{lemma}[Control] \label{lem:MASTcontrol}
Let $T$ be a tree with  $\feuille{T} \subseteq V$.
Statements $(i)$ and $(ii)$ below are equivalent.
\begin{enumerate}[$(i)$.]
\item $T$ is an \AST{} of  $\calc$ with size $k$.
\item $T = \lr{c_1, c_2, \ldots, c_k}$ for some $(c_1, c_2, \ldots, c_k) \in V_1 \times V_2 \times \cdots \times  V_k$.
\end{enumerate}
\end{lemma}

\begin{proofqed}
Let $(c_1, c_2, \ldots, c_k) \in V_1 \times V_2 \times \cdots \times  V_k$.
Distinct $c_i$'s appear in distinct child subtrees of the root of $C$, resp.~of $C_{a, b}$.
Hence, $\lr{c_1, c_2, \ldots, c_k}$ is a restriction of $C$, resp.~of $C_{a, b}$.
This proves that $(ii)$ implies $(i)$. 
It remains to show that $(i)$ implies $(ii)$.

Assume $(i)$: $T$ is an \AST{} of $\calc$ with size $k$.

\begin{trivlist}
\trivpoint
We first prove that $T$ has height one.
By  way of contradiction,
suppose that the height of $T$ is greater than one.
Then,
one can find three distinct leaves $a$, $b$, $c \in \feuille{T}$ \suchthat{} 
$\restr{T}{\{ a, b, c \}} = \lr{\lr{a, b}, c}$.
(Indeed, there exists an internal non-root node $\nu$ of $T$.
Pick a leaf $c$ which is not a descendant of $\nu$
and  two descendant leaves $a$ and $b$ of $\nu$.)
However, $\restr{C_{a, b}}{\{ a, b, c \}} = \lr{a, b, c}$, 
and thus $T$ and $C_{a, b}$ disagree on $\{ a, b, c \}$: 
contradiction.
\end{trivlist}

Since $T$ has height one,
there exist $k$ pairwise distinct leaf labels $c_1$, $c_2$, \ldots, $c_k \in V$ 
\suchthat{} 
$T = \lr{c_1, c_2, \ldots, c_k}$.

\begin{trivlist}
\trivpoint 
We now show that distinct $c_j$'s belong to distinct $V_i$'s.
By way of contradiction, 
assume there exist three indices $i$, $j_1$, $j_2 \in \seg{1}{k}$ satisfying  $j_1 \ne j_2$,
$c_{j_1} \in V_i$ and $c_{j_2} \in V_i$.
Since $k$ is greater that two, 
there exists  $j \in \seg{1}{k}$ such that $j \notin \{ j_1, j_2 \}$.
If $c_j \in V_i$  then 
$\restr{C}{\{ c_{j_1}, c_{j_2}, c_j \}} = \restr{B_i}{\{ c_{j_1}, c_{j_2}, c_{j} \}}$
and
if $c_j \notin V_i$ then
$\restr{C}{\{ c_{j_1}, c_{j_2}, c_{j} \}} = \lr{\lr{c_{j_1}, c_{j_2}}, c_{j}}$.
In both cases, $\restr{C}{\{ c_{j_1}, c_{j_2}, c_j \}}$ is a binary tree unlike $\restr{T}{\{ c_{j_1}, c_{j_2}, c_j \}}$.
Thus, $C$ and $T$ disagree on $\{ c_{j_1}, c_{j_2}, c_j \}$: 
contradiction.
\end{trivlist}

Up to a permutation of the $c_i$'s, one has $(c_1, c_2, \ldots, c_k) \in V_1 \times V_2 \times \cdots \times  V_k$. 
This proves $(ii)$ and  concludes the proof of Lemma~\ref{lem:MASTcontrol}.
\end{proofqed}

\begin{lemma}[Selection] \label{lem:MASTselection}
Let $e \in E$ be an edge of $G$ 
and
let $(c_1, c_2, \ldots, c_k) \in V_1 \times V_2 \times \cdots \times V_k$.
The tree 
$\lr{c_1, c_2, \ldots, c_k}$ is a restriction of $S_e$
if and only if 
at least one endpoint of $e$ is not in $\{ c_1, c_2, \ldots, c_k \}$.
\end{lemma}

\begin{proofqed}
The ``if part'' is easy.
Let us now show the ``only if'' part.

Assume that $\lr{c_1, c_2, \ldots, c_k}$ is a restriction of $S_e$ and that 
$e \subseteq \{ c_1, c_2, \ldots, c_k \}$.
Let $c_{i_1}$ and $c_{i_2}$ be the two endpoints of $e$:
$e = \{ c_{i_1}, c_{i_2} \}$.
Since $k$ is greater than two,
 there exists $i \in \seg{1}{k}$ \suchthat{} $c_i \notin e$.
 The restriction of $S_e$ to  $\{ c_{i_1}, c_{i_2}, c_i \}$ equals $\lr{\lr{c_{i_1}, c_{i_2}}, c_i}$, 
and thus $S_e$ disagrees with $\lr{c_1, c_2, \ldots, c_k}$ on $\{c_{i_1}, c_{i_2}, c_i \}$: contradiction.
This concludes the proof of Lemma~\ref{lem:MASTselection}.
\end{proofqed}

\paragraph{Correctness of the reduction.}
It is clear that $(q, \calt)$ is computable in polynomial time from $(k, G, V_1, V_2, \ldots, V_k)$.
Moreover, the root of $C$ has degree $k$,
the root of $C_{a, b}$ has degree $k + 2$, 
the root of $S_e$ has degree $k + 1$, 
and any non-root internal node of a tree in $\calt$ has degree two.
Hence, 
the maximum degree $D$ over  all trees in $\calt$ is equal to $k + 2$:
$D = O(k)$.
Eventually, 
let us derive from  Lemmas~\ref{lem:MASTcontrol} and~\ref{lem:MASTselection} that 
$(k, G, V_1, V_2, \ldots, V_k)$ is a yes-instance of \ppis{} 
if and only if 
$(q, \calt)$  is a yes-instance of \past.
\begin{trivlist}
\itemif
Assume there exists an \AST{} $T$ of $\calt$ with size $q = k$.
The tree $T$ is of the form $T = \lr{c_1, c_2, \ldots, c_k }$
for some 
$(c_1, c_2, \ldots, c_k) \in V_1 \times V_2 \times \cdots \times V_k$ by Lemma~\ref{lem:MASTcontrol}.
Furthermore, 
the set $I := \{ c_1, c_2, \ldots, c_k \}$ is an independent set of $G$ by Lemma~\ref{lem:MASTselection}, 
and for every $i \in \seg{1}{k}$, 
$I \cap V_i = \{ c_i \}$ is a singleton. 
\itemonlyif 
Conversely,
assume that there exists an independent set $I$ of $G$ \suchthat{} $I \cap V_i$ is a singleton for all $i \in \seg{1}{k}$.
Write $I$ in the form $I = \left\{ c_1, c_2, \ldots, c_k \right\}$ with 
$(c_1, c_2, \ldots, c_k) \in V_1 \times V_2 \times \cdots \times V_k$.
The tree $\lr{c_1, c_2, \ldots, c_k }$ is both an agreement subtree of $\calc$ by Lemma~\ref{lem:MASTcontrol} and  an agreement subtree of $\left\{ S_e : e \in E \right\}$ by Lemma~\ref{lem:MASTselection}.
Therefore, 
$\lr{c_1, c_2, \ldots, c_k }$  is  an \AST{} of $\calt$ with size $q$.
\end{trivlist}
\end{proof}

\section{Parameterized complexity of \pmct} \label{sec:MCThard}

The decision version of  the \pmct{} problem is:
\defdecpb{\textsc{Compatible Tree} (\pct).}
{An integer $q \ge 1$ and a finite collection $\calt$ of trees on a common leaf set.}
{Is  there a tree of size $q$ compatible with  $\calt$?}
Let  \pctD{} denote the version of \pct{} parameterized by $2^{\left\lfloor D \quotient 2 \right\rfloor}$, 
where $D := \max_{T \in \calt} \Delta(T)$.  
In this section, we  \lfptred{} \pisk{} to \pctD{} in order to prove: 
the $\WI$-hardness of the version of \pct{} parameterized by $D$, 
and Result~$(R2)$ stated in Section~\ref{sec:contrib}.
\ppisdeux{} is used as an auxiliary problem.

\begin{lemma} \label{lem:PPISdeux}
\pisk{} \lfptred{}s to \ppisdeuxk. 
\end{lemma}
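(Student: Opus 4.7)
The plan is to give a direct linear FPT-reduction from \pisk{} to \ppisdeuxk{}, in the spirit of Lemma~\ref{lem:PIS} but enriched so that each color class can naturally supply two picks. Given an instance $(k, G)$ of \pisk{} with $G = (V, E)$, I would output a \ppisdeux{} instance with parameter $k$ unchanged, which makes the reduction linear.

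Concretely, let the vertex set of $\tilde G$ be $V \times \{1, 2\} \times \seg{1}{k}$ and let $\tilde V_i := V \times \{1, 2\} \times \{ i \}$. Put no edges inside any $\tilde V_i$, and for $i \ne j$ connect $(u, x, i)$ to $(v, y, j)$ exactly when $u = v$ or $uv \in E$. Each $\tilde V_i$ has $2 |V|$ vertices and is trivially independent in $\tilde G$, so the output is a valid \ppisdeux{} instance, computable in polynomial time. Intuitively, the two copies indexed by $\{1, 2\}$ give each color class enough room that the same vertex of $V$ may be picked twice (once per copy) without producing a single $\tilde G$-vertex.

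For correctness, in one direction, from an independent set $\{v_1, \ldots, v_k\}$ of $G$ I would take $\tilde I := \{ (v_i, 1, i), (v_i, 2, i) : i \in \seg{1}{k} \}$: the conditions $v_i \ne v_j$ and $v_i v_j \notin E$ make every cross-part pair a non-edge of $\tilde G$, intra-part pairs are non-edges by construction, and $|\tilde I \cap \tilde V_i| = 2$. In the other direction, given a valid $\tilde I$, the ``$u = v$'' clause in the cross-part edges forces each $u \in V$ to appear in picks of at most one part, so the projections $R_i := \{ u \in V : (u, \cdot, i) \in \tilde I \}$ are pairwise disjoint subsets of $V$ of size one or two, and for $i \ne j$ every pair $(r, r') \in R_i \times R_j$ satisfies $r \ne r'$ and $rr' \notin E$. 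Picking one representative $r_i \in R_i$ per $i$ therefore yields $k$ pairwise distinct and pairwise non-adjacent vertices of $G$, as required.

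The key subtlety that the construction must sidestep is that each $\tilde V_i$ is itself required to be an independent set of $\tilde G$, so no intra-part edge is available to forbid picking two $G$-adjacent vertices from the same color class. The construction handles this by using the ``$u = v$'' cross-part edges to force the projections $R_i$ to be pairwise disjoint, after which extracting one representative per $R_i$ suffices to deliver an independent set of size $k$ without ever having to argue about intra-part adjacencies. I expect the verification of this disjointness and the subsequent representative selection to be the main bookkeeping step in the formal write-up.
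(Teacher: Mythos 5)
Your reduction is correct, but it follows a genuinely different route from the paper's. The paper proves this lemma by composing Lemma~\ref{lem:PIS} with a padding step: starting from a \ppisk{} instance, it adjoins $k$ fresh isolated vertices $a_1, a_2, \ldots, a_k$, one per part, so that a transversal meeting each part once extends (by throwing in all the $a_i$'s) to one meeting each part twice, and conversely one can discard the distinguished vertex of each part to recover a singleton transversal. You instead give a self-contained one-step reduction from \pisk{} that reruns the Pietrzak-style product construction of Lemma~\ref{lem:PIS} with each part doubled, using the ``$u = v$'' cross-part edges to force the projections $R_i$ to be pairwise disjoint and pairwise non-adjacent, and then extracting one representative per part. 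Both arguments are sound and both keep the parameter equal to $k$, hence linear. The paper's padding has the advantage of being uniform in the multiplicity: as the remark following the lemma notes, the same map gives \ppispk{} $\le$ \ppisqk{} for every $p$, so the Pietrzak construction is done once and for all in Lemma~\ref{lem:PIS}. Your construction buys a direct, single reduction at the cost of redoing that construction; it too generalizes (take $p$ copies instead of two), but the case analysis in the backward direction --- parts whose two picks project to one vertex versus two, possibly $G$-adjacent, vertices --- is exactly the bookkeeping the paper's padding avoids, since there the non-$a_i$ pick is already guaranteed to be a legitimate vertex of $V_i$.
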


\begin{proof}
According to Lemma~\ref{lem:PIS}, 
it suffices to \lfptred{} \ppisk{} to \ppisdeuxk.
We rely on a padding argument.
Each instance $(k, G, V_1, V_2, \ldots, V_k)$ of \ppis{} is transformed into an instance 
$(k, \tilde G, \tilde V_1, \tilde V_2, \ldots, \tilde V_k)$ of \ppisdeux{} where $\tilde G$ and the $\tilde V_i$'s are as follows.

Informally, 
$\tilde G$ is obtained by adding $k$ isolated vertices to $G$,
and each $\tilde V_i$ is obtained by adding a single one of these new vertices to $V_i$.
More formally, 
let $V$ denote the vertex set of $G$ 
and 
let $E$ denote the edge set of $G$:
$V =  V_1 \cup V_2 \cup \cdots \cup  V_k$ and  $G = (V, E)$.
Let $a_1$, $a_2$, \ldots, $a_k$ be $k$ new vertices: 
for all $i$, $j \in \seg{1}{k}$,
 $a_i$ is not an element of $V$,
 and $i \ne j$ implies $a_i \ne a_j$. 
Construct 
$\tilde G := ( V  \cup \{ a_1, a_2, \ldots, a_k \}, E )$,
and $\tilde V_i := V_i \cup \{ a_i \}$ for each $i \in \seg{1}{k}$.

It is clear that 
$(k, \tilde G, \tilde V_1, \tilde V_2, \ldots, \tilde V_k)$ is an instance of \ppisdeux{} computable in polynomial time from $(k, G, V_1, V_2, \ldots, V_k)$.
It remains to check that
$(k,  G,  V_1,  V_2, \ldots,  V_k)$ is a yes-instance of \ppis{} 
 if and only if 
$(k, \tilde G, \tilde V_1, \tilde V_2, \ldots, \tilde V_k)$  is a yes-instance of \ppisdeux.

\begin{trivlist}
\itemonlyif{}
Assume that there exists an independent set $I$ of $G$ \suchthat{} $I \cap V_i$ is a singleton for every $i \in \seg{1}{k}$.
Then $\tilde I := I \cup \{ a_1, a_2, \ldots, a_k \}$ is an independent set of $\tilde G$,
and $\tilde I \cap \tilde V_i$ is a doubleton for all $i \in \seg{1}{k}$.
\itemif{}
Conversely, assume that there exists an independent set $\tilde I$ of $\tilde G$ \suchthat{}
$\tilde I \cap \tilde V_i$ is a doubleton for every  $i \in \seg{1}{k}$.
For each $i \in \seg{1}{k}$, 
pick an element $v_i$ in $\tilde I \cap \tilde V_i$ distinct from $a_i$.
The set $I := \{ v_1, v_2, \ldots, v_k \}$  is an independent set of $G$,
and $I \cap V_i = \{ v_i \}$ is a singleton for all $i \in \seg{1}{k}$. 
\end{trivlist}
\end{proof}

\begin{remark}
The mapping 
$(k,  G,  V_1,  V_2, \ldots,  V_k) 
\longmapsto 
(k, \tilde G, \tilde V_1, \tilde V_2, \ldots, \tilde V_k)$, 
presented in  the proof of Lemma~\ref{lem:PPISdeux},
 induces a linear FPT-reduction from \ppispk{} to  \ppisqk{} for \emph{any} integer $p \ge 1$,.
Since \pisk{} \lfptred{}s to \ppisk{} by Lemma~\ref{lem:PIS}, 
\pisk{} \lfptred{}s to \ppispk{} for every  integer $p \ge 1$.
\end{remark}

Definitions~\ref{def:composetree}, \ref{def:Rn} and~\ref{def:Hn} introduce gadgets that are  used to reduce \ppisdeux{} to \pct{} in the the proof of Theorem~\ref{thm:pis-MCT}.

\begin{definition} \label{def:composetree}
Let $n$ be a positive integer, 
let $T$ be a tree on $\seg{1}{n}$, 
and let $T_1$, $T_2$, \ldots, $T_n$ be $n$ non-empty trees with pairwise disjoint leaf sets. 
The tree on $\feuille{T_1} \cup \feuille{T_2} \cup \cdots \cup \feuille{T_n}$,
obtained by replacing each leaf $i$ of $T$ with  $T_i$ is denoted by $T[T_1, T_2, \ldots, T_n]$.
\end{definition}
For instance, let  $T := \lr{\lr{1, 2}, \lr{3, \lr{4, 5}}, 6}$.
For any non-empty trees  $T_1$, $T_2$, $T_3$, $T_4$, $T_5$, $T_6$ with pairwise disjoint leaf sets,  
we have 
$$T[T_1, T_2, T_3, T_4, T_5, T_6] = \lr{\lr{T_1, T_2}, \lr{T_3, \lr{T_4, T_5}}, T_6} \, ,
$$
 and in particular, 
$$
T[\lr{2, 3}, 1, \lr{6, 7, 8}, 4, 5, \lr{\lr{9, 11}, 10}]
=  
\lr{\lr{1, \lr{2, 3}}, \lr{\lr{4, 5}, \lr{6, 7, 8}}, \lr{\lr{9, 11}, 10}} \, .
$$

\begin{definition}\label{def:Rn}
For each integer $n \ge 1$,
$R_n$ denotes the binary tree on $\seg{1}{n}$, 
defined recursively as follows: 
\begin{itemize}
\item
$R_1 = 1$, and 
\item
 $R_n = \lr{R_{n - 1}, n}$  for every integer $n \ge 2$.
\end{itemize}
\end{definition}
For instance,  one has
$R_2 = \lr{1, 2}$,
$R_3 = \lr{\lr{1, 2}, 3}$, 
$R_4 = \lr{\lr{\lr{1, 2}, 3}, 4}$, 
$R_5 =  \lr{\lr{\lr{\lr{1, 2}, 3}, 4}, 5}$,
\emph{etc}.

\begin{property} \label{prop:rateau}
Let $n$ be a positive integer.
Let $v^1$, $v^2$, \ldots, $v^n$ be $n$ pairwise distinct labels.
A tree with leaf labels in $\{ v^1, v^2, \ldots, v^n \}$ is compatible with $\left\{ R_n[v^1, v^2, \ldots, v^n], R_n[v^n, \ldots,  v^2,  v^1] \right\}$ if and only if its size is at most two.
\end{property}
\begin{definition}\label{def:Hn}
For every  integer $k \ge 1$, 
$H_k$ denotes a binary tree on $\seg{1}{k}$ with minimum height $\ceil{\log k}$;
for all  $i$, $j \in \seg{1}{k}$, $H_k^{i, j}$ denotes the tree on $\seg{1}{k}$ obtained from $H_k$ by collapsing all internal edges on the path connecting $i$ and $j$;
$\lambda_k^{i, j}$ denotes the least common ancestor of $i$ and $j$ in $H_k^{i, j}$.
\end{definition}
For instance, 
$\lr{\lr{\lr{1, 2}, \lr{3, 4}}, 5}$ is a suitable tree $H_5$,
and 
$\lr{\lr{\lr{1, 2}, \lr{3, 4}}, \lr{ \lr{5,6}, \lr{7,8}}}$ is a suitable tree $H_8$;
for such trees, 
one has 
$H_5^{1, 4} = \lr{ \lr{1, 2, 3, 4}, 5}$ 
and 
$H_8^{3, 5} = \lr{ \lr{1, 2}, \lr{3, 4, 5, 6}, \lr{7, 8}}$.

\begin{property} \label{prop:deg-lambda}
All internal nodes in $H_k^{i, j}$  are of degree two,
except maybe $\lambda_k^{i, j}$ whose degree is at most $2 \ceil{\log k}$. 
\end{property}

\begin{theorem} \label{thm:pis-MCT}
\pisk{} \lfptred{}s to \pctD.
\end{theorem}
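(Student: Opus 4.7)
By Lemma~\ref{lem:PPISdeux}, it suffices to linearly FPT-reduce \ppisdeuxk{} to \pctD. Given an instance $(k, G, V_1, \ldots, V_k)$ of \ppisdeux{} with $V = V_1 \cup \cdots \cup V_k$, I would produce an instance $(q, \calt)$ of \pct{} with $q := 2k$, every tree of $\calt$ on common leaf set $V$, and $D := \max_{T \in \calt} \Delta(T) \le 2 \ceil{\log k} + 1$. This yields $2^{\lfloor D \quotient 2 \rfloor} = O(k)$, as required for a linear FPT-reduction.

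Fix an enumeration $V_i = \{ v^i_1, \ldots, v^i_{n_i} \}$ for each $i$, set $R^+_i := R_{n_i}[v^i_1, \ldots, v^i_{n_i}]$ and $R^-_i := R_{n_i}[v^i_{n_i}, \ldots, v^i_1]$, and fix an arbitrary binary tree $B_\ell$ on each $V_\ell$. The collection $\calt$ consists of two kinds of gadgets. \emph{Control trees}: for each $i \in \seg{1}{k}$ and each $\sigma \in \{+, -\}$, put into $\calt$ the tree
\[
C^\sigma_i := H_k[B_1, \ldots, B_{i-1}, R^\sigma_i, B_{i+1}, \ldots, B_k].
\]
\emph{Selection trees}: for each edge $e = \{ a, b \} \in E$ with $a \in V_i$, $b \in V_j$ (necessarily $i \ne j$, since each $V_\ell$ is independent in $G$), put into $\calt$ the tree $S_e$ obtained from $H^{i, j}_k$ by replacing leaf $\ell \notin \{ i, j \}$ with $B_\ell$, leaf $i$ with $\restr{B_i}{(V_i \setnabot \{ a \})}$, and leaf $j$ with $\restr{B_j}{(V_j \setnabot \{ b \})}$, and finally grafting the cherry $\lr{a, b}$ as an additional child of $\lambda^{i, j}_k$.

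The correctness splits into two lemmas. A \emph{control lemma} uses Property~\ref{prop:rateau} to show that any tree $T$ compatible with every $C^\sigma_i$ satisfies $|V_i \cap \feuille{T}| \le 2$ for each $i$; combined with $|\feuille{T}| = 2k$ this forces $|V_i \cap \feuille{T}| = 2$ for each $i$, so each $\restr{C^\sigma_i}{\feuille{T}}$ is a fully binary tree of size $2k$ and $T$ must equal it, namely $T = H_k[\lr{p_1^1, p_1^2}, \ldots, \lr{p_k^1, p_k^2}]$ for pairs $\{ p_\ell^1, p_\ell^2 \} \subseteq V_\ell$. A \emph{selection lemma} then shows that such a $T$ is compatible with $S_e$ if and only if $\{ a, b \} \not\subseteq \feuille{T}$: in the offending case where $P_i := V_i \cap \feuille{T} = \{ a, c \}$ and $P_j := V_j \cap \feuille{T} = \{ b, d \}$, one computes $\restr{T}{\{ a, b, c, d \}} = \lr{\lr{a, c}, \lr{b, d}}$ and $\restr{S_e}{\{ a, b, c, d \}} = \lr{\lr{a, b}, c, d}$, and the first is not a refinement of the second since $\{ a, b \}$ is not a clade on the left. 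Combining both lemmas yields: $(k, G, V_1, \ldots, V_k)$ is a yes-instance of \ppisdeux{} if and only if $(2k, \calt)$ is a yes-instance of \pct.

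The degree bound is immediate: every internal node of $H_k$, $R_{n_i}$, and $B_\ell$ is binary, so the only node in any tree of $\calt$ with degree exceeding two is the $\lambda^{i, j}_k$ of some $S_e$, whose degree in $H^{i, j}_k$ is at most $2 \ceil{\log k}$ by Property~\ref{prop:deg-lambda}, plus one for the extra $\lr{a, b}$ child. The main obstacle is the forward direction of the selection lemma: one must verify that whenever at most one of $a, b$ lies in $\feuille{T}$, the relation ``$T$ refines $\restr{S_e}{\feuille{T}}$'' really holds, by contracting precisely those internal edges of $T$ that lie on the $H_k$-path from $i$ to $j$ collapsed in $H^{i, j}_k$. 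Using $\restr{B_i}{(V_i \setnabot \{ a \})}$ and $\restr{B_j}{(V_j \setnabot \{ b \})}$ in $S_e$ rather than $B_i$ and $B_j$ themselves is precisely what avoids duplicate labels and prevents a spurious conflict when exactly one of $a, b$ is selected.
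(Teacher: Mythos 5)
Your construction is essentially the paper's: caterpillar pairs $R_n[v^1,\ldots,v^n]$ / $R_n[v^n,\ldots,v^1]$ controlled via Property~\ref{prop:rateau}, a balanced binary tree $H_k$ to keep degrees logarithmic, and selection trees built from $H_k^{i,j}$ with the cherry $\lr{a,b}$ grafted onto $\lambda_k^{i,j}$; your control and selection lemmas and the disagreement computation $\lr{\lr{a,c},\lr{b,d}}$ vs.\ $\lr{\lr{a,b},c,d}$ match the paper's Properties~\ref{lem:MCTcontrol} and~\ref{lem:MCTselection}. The only structural difference is cosmetic: you use $2k$ control trees $C_i^{\sigma}$, one per class and orientation, where the paper packs all forward caterpillars into a single tree $C$ and all reversed ones into $\tilde C$; both yield the same control lemma by the same restriction argument, yours at the cost of a polynomially larger (still fine) instance.

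One small technical point you omit, which the paper itself flags in a closing remark: under the paper's definition, an FPT-reduction must output an instance whose parameter is \emph{exactly} $\gamma(k)$ for a recursive $\gamma$, not merely $O(k)$. In your construction (as in the paper's, before the patch), if $E=\emptyset$ then every tree in $\calt$ is binary and $2^{\lfloor D/2\rfloor}$ is not a function of $k$ alone. This is easily repaired exactly as the paper does: add one padded control tree containing a node of degree exactly $2\ceil{\log k}+1$ (obtained by collapsing consecutive internal edges in one $B_\ell$), which is a coarsening of an existing control tree and so does not disturb correctness. Worth stating, but not a flaw in the core argument.
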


\begin{proof}
According to Lemma~\ref{lem:PPISdeux}, it suffices to \lfptred{} \ppisdeuxk{} to \pctD.
Each instance $(k, G, V_1, V_2, \ldots, V_k)$ of \ppisdeuxk{} is transformed into an instance $(q, \calt)$ of \pct{} where $q := 2k$ and where $\calt$ is a collection of trees described below.

\paragraph{The collection $\calt$.}
We construct a collection $\calt$ of gadget trees on the vertex set $V :=  V_1 \cup V_2 \cup \cdots \cup V_k$ of $G$.
Let $n$ be \suchthat{} $V_i$ has cardinality $n$ for every $i \in \seg{1}{k}$.
For each $i \in \seg{1}{k}$, 
write $V_i$ in the form  $V_i = \left\{ v^1_i, v^2_i, \ldots, v^n_i \right\}$;
$B_i := R_n[v^1_i, v^2_i, \ldots, v^n_i]$ 
and 
$\tilde B_i := R_n[v^n_i, \ldots, v^2_i,  v^1_i]$ encode $V_i$.

Let $C := H_k[B_1, B_2, \ldots, B_k]$ and let $\tilde C := H_k[\tilde B_1, \tilde B_2, \ldots, \tilde B_k]$ (see Figure~\ref{fig:CtildeC}): 
$C$ and $\tilde C$ are the \emph{control components} of our gadget.
\psfrag{Bi}[][r]{$B_i$}
\psfrag{v11}[][]{$v_1^1$}
\psfrag{v12}[][]{$v_1^2$}
\psfrag{v13}[][]{$v_1^3$}
\psfrag{v14}[][]{$v_1^4$}
\psfrag{v21}[][]{$v_2^1$}
\psfrag{v22}[][]{$v_2^2$}
\psfrag{v23}[][]{$v_2^3$}
\psfrag{v24}[][]{$v_2^4$}
\psfrag{v31}[][]{$v_3^1$}
\psfrag{v32}[][]{$v_3^2$}
\psfrag{v33}[][]{$v_3^3$}
\psfrag{v34}[][]{$v_3^4$}
\psfrag{v41}[][]{$v_4^1$}
\psfrag{v42}[][]{$v_4^2$}
\psfrag{v43}[][]{$v_4^3$}
\psfrag{v44}[][]{$v_4^4$}
\psfrag{v51}[][]{$v_5^1$}
\psfrag{v52}[][]{$v_5^2$}
\psfrag{v53}[][]{$v_5^3$}
\psfrag{v54}[][]{$v_5^4$}
\begin{figure}
\psfrag{C}[][]{$C$}
\psfrag{tildeC}[][]{$\tilde C$}
\psfrag{HCINQ}[][r]{$H_5$}
\psfrag{tildeBi}[B][r]{$\tilde B_i$}
\centering 
\epsfig{figure=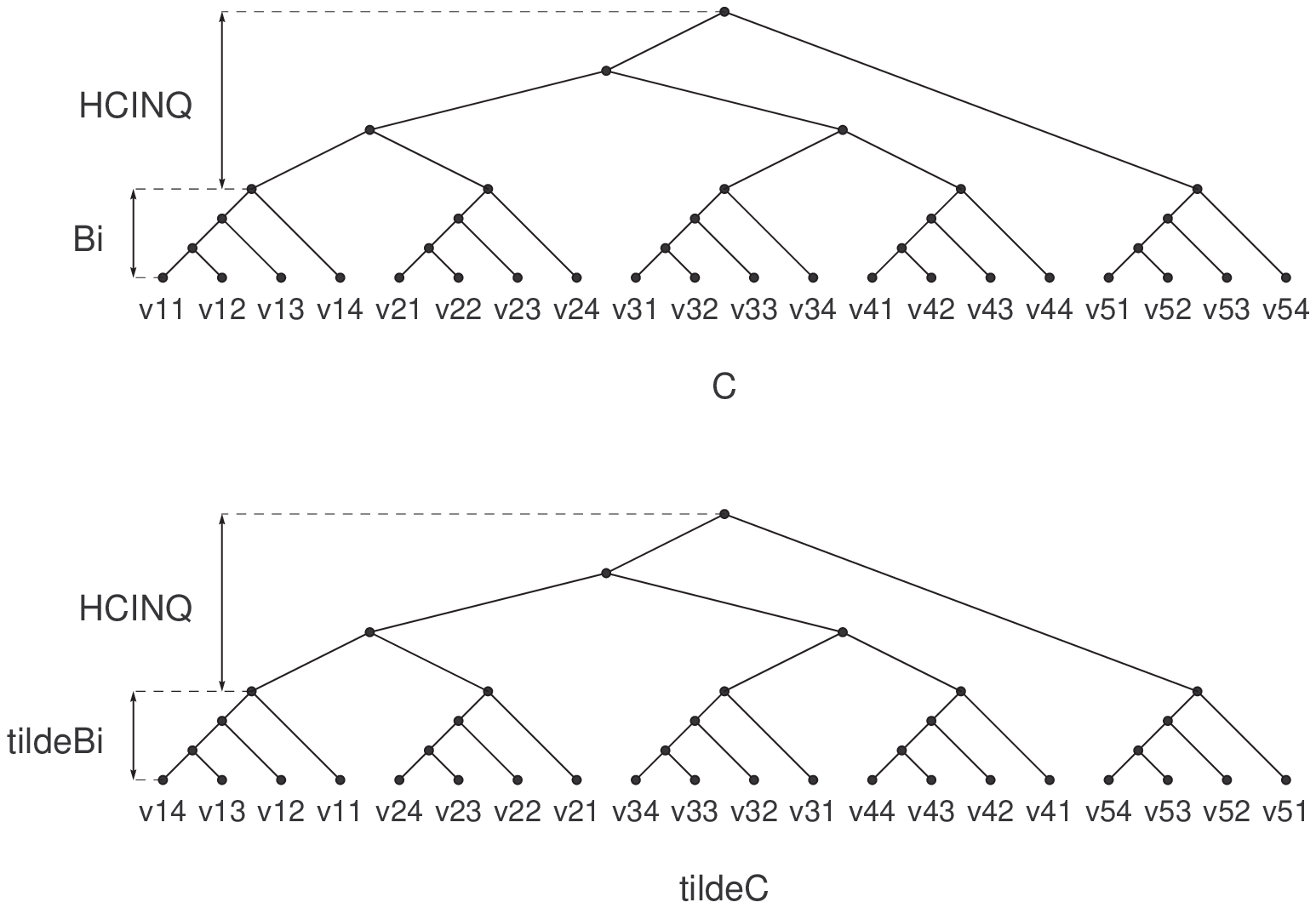,width=\linewidth}
\caption{The trees $C$ and $\tilde C$ in the case of $k = 5$ and $n = 4$. \label{fig:CtildeC}}
\end{figure}

Let $E$ be the edge set of $G$: $G = (V, E)$.
For each edge $e = \{ v_i^r, v_j^s \} \in E$, 
compute the tree $S_e$ obtained from $H_k^{i, j}\left[B_1, B_2,  \ldots, B_k \right]$ by first removing its leaves $v_i^r$ and $v_j^s$, 
and then re-grafting $\lr{v_i^r, v_j^s}$ as a new child subtree of $\lambda_k^{i, j}$ (see Figure~\ref{fig:Se}). 
The $S_e$'s ($e \in E$) are the \emph{selection components} of our gadget.
\begin{figure}
\psfrag{S}[][]{$S_{\left\{ v^2_1, v^3_4 \right\}}$}
\psfrag{H514}[][r]{$H_5^{1, 4}$}
\psfrag{L14}[][r]{$\lambda_5^{1, 4}$}
\centering
\epsfig{figure=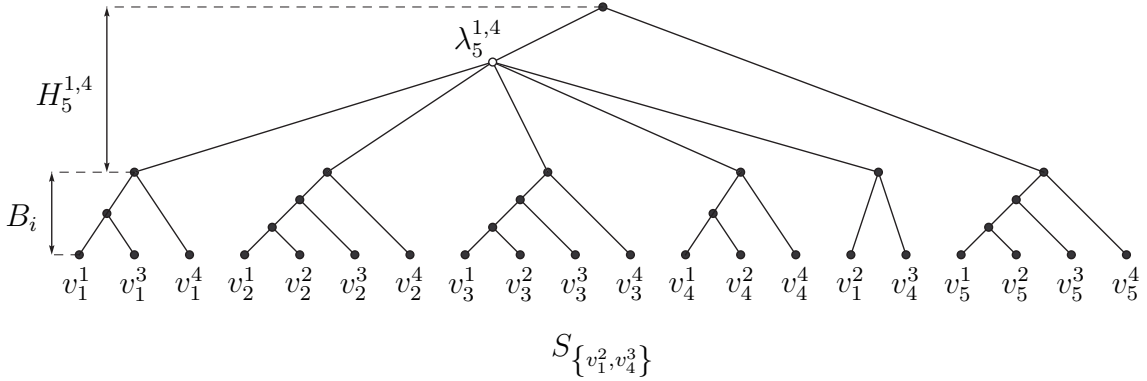,width=\linewidth}
\caption{The tree $S_{\left\{ v^2_1, v^3_4 \right\}}$ in the case of 
$k = 5$ 
and 
$n = 4$.  \label{fig:Se}}
\end{figure}

The collection of trees $\calt$ is defined as $\calt := \{ C, \tilde C \} \cup  \left\{ S_e : e \in E \right\}$.

Property~\ref{lem:MCTcontrol} below is easily deduced from Property~\ref{prop:rateau}.
\begin{property}[Control] \label{lem:MCTcontrol}
Let $T$ be a tree with  $\feuille{T} \subseteq V$.
Statements $(i)$ and $(ii)$ below are equivalent.
\begin{enumerate}[$(i)$.]
\item $T$ is a tree of size $q$,
 compatible with $\{ C, \tilde C \}$.
\item $T$ is of the form $T = H_k[\lr{a_1, b_1}, \lr{a_2, b_2}, \ldots, \lr{a_k, b_k}]$ where, 
for each $i \in \seg{1}{k}$, 
$a_i$ and $b_i$ are two distinct elements of $V_i$.
\end{enumerate}
\end{property}

\begin{property}[Selection] \label{lem:MCTselection}
Let $e \in E$ be an edge of $G$ and let $T$ be a tree of size $q$ compatible with $\{ C, \tilde C \}$.  
Then, 
$T$ refines $\restr{S_e}{\feuille{T}}$ if and only if at least one endpoint of $e$ is not in $\feuille{T}$.
\end{property}

\paragraph{Correctness of the reduction.}
It is clear that $(q, \calt)$ is computable in polynomial time from $(k, G, V_1, V_2, \ldots, V_k)$.
Moreover, 
both $C$ and $\tilde C$ are binary trees, 
and all internal nodes in $S_e$ have degree two, 
except maybe $\lambda_k^{i, j}$ whose degree is at most $2 \ceil{\log k} + 1$ (see Property~\ref{prop:deg-lambda}).
Hence, the maximum degree $D$ over all trees in $\calt$ is at most $2 \ceil{\log k} + 1$, 
and thus $2^{\left\lfloor D \quotient 2 \right\rfloor} = O(k)$.
Eventually, 
it remains to show that: 
$(k, G, V_1, V_2, \ldots, V_k)$ is a yes-instance of \ppisdeux{}
if and only if 
$(q, \calt)$  is a yes-instance of \past.
\begin{trivlist}
\itemif{}
Assume that there exists a tree $T$ of size $q$ compatible with $\calt$.
Let $I := \feuille{T}$.
By Property~\ref{lem:MCTcontrol}
$I \cap V_i$ is a doubleton for every $i \in \seg{1}{k}$.
By Property~\ref{lem:MCTselection}, $I$ is an independent set of $G$. 
\itemonlyif{}
Conversely, 
assume that there exists an independent set $I$ of $G$ \suchthat{} $I \cap V_i$ is a doubleton for all $i \in \seg{1}{k}$.
For each $i \in \seg{1}{k}$, let $a_i$ and $b_i$ be such that $I \cap V_i = \{ a_i, b_i \}$.
The tree $T := H_k[\lr{a_1, b_1}, \lr{a_2, b_2}, \ldots, \lr{a_k, b_k}]$ is compatible with $\{ C , \tilde C \}$ according to Property~\ref{lem:MCTcontrol}. 
Furthermore, 
$T$ is also compatible with $\left\{ S_e : e \in E \right\}$ according to Property~\ref{lem:MCTselection}.
We have thus exhibited a tree $T$ of size $q$  compatible with $\calt$. 
\end{trivlist}
\begin{remark}
$2^{\left\lfloor D \quotient 2 \right\rfloor} = O(k)$ is enough to obtain Result~$(R2$).
But, 
our construction does not ensure that $2^{\left\lfloor D \quotient 2 \right\rfloor}$ is a function of $k$ only.
Hence, our reduction is not exactly an FPT-reduction yet.
Anyway,
this can be easily repaired. 
Collapse $2 \ceil{\log k} - 1$ consecutive internal edges in $B_1$ to obtain a tree $B_1'$ of maximum degree $2 \ceil{\log k} + 1$ 
and 
add to $\calt$ the tree $C' := H_k[B_1', B_2, \ldots, B_k]$.
\end{remark}
\end{proof}

\bibliographystyle{plain}
\bibliography{CPM06}

\begin{thebibliography}{10}

\bibitem{AK97}
A.~Amir and D.~Keselman.
\newblock Maximum agreement subtree in a set of evolutionary trees: metrics and
  efficient algorithms.
\newblock {\em SIAM Journal on Computing}, 26(6):1656--1669, 1997.

\bibitem{BerryN06}
V.~Berry and F.~Nicolas.
\newblock Improved parameterized complexity of the maximum agreement subtree
  and maximum compatible tree problems.
\newblock {\em IEEE/ACM Transactions on Computational Biology and
  Bioinformatics}, 3(3):289--302, 2006.

\bibitem{B97}
D.~Bryant.
\newblock {\em Building trees, hunting for trees and comparing trees: theory
  and method in phylogenetic analysis}.
\newblock PhD thesis, University of Canterbury, Department of Mathemathics,
  1997.

\bibitem{ChenHKX06}
J.~Chen, X.~Huang, I.~A. Kanj, and G.~Xia.
\newblock Strong computational lower bounds via parameterized complexity.
\newblock {\em Journal of Computer and System Sciences}, 72(8):1346--1367,
  2006.

\bibitem{DowFel99}
R.~G. Downey and M.~R. Fellows.
\newblock {\em Parameterized Complexity}.
\newblock Monographs in Computer Science. Springer, 1999.

\bibitem{Fal95}
M.~Farach, T.~M. Przytycka, and M.~Thorup.
\newblock On the agreement of many trees.
\newblock {\em Information Processing Letters}, 55(6):297--301, 1995.

\bibitem{FG85}
C.~R. Finden and A.~D. Gordon.
\newblock Obtaining common pruned trees.
\newblock {\em Journal of Classification}, 2:255--276, 1985.

\bibitem{GW01}
G.~Ganapathysaravanabavan and T.~J. Warnow.
\newblock Finding a maximum compatible tree for a bounded number of trees with
  bounded degree is solvable in polynomial time.
\newblock In O.~Gascuel and B.~M.~E. Moret, editors, {\em Proceedings of the
  1st International Workshop on Algorithms in Bioinformatics (WABI'01)}, volume
  2149 of {\em Lecture Notes in Computer Science}, pages 156--163.
  Springer-Verlag, 2001.

\bibitem{GuillemotN06}
S.~Guillemot and F.~Nicolas.
\newblock Solving the maximum agreement subtree and the maximum compatible tree
  problems on many bounded degree trees.
\newblock In M.~Lewenstein and G.~Valiente, editors, {\em Proceedings of the
  17th Annual Symposium on Combinatorial Pattern Matching (CPM'06)}, volume
  4009 of {\em Lecture Notes in Computer Science}, pages 165--176.
  Springer-Verlag, 2006.

\bibitem{HS96}
A.~M. Hamel and M.~A. Steel.
\newblock Finding a maximum compatible tree is {NP}-hard for sequences and
  trees.
\newblock {\em Applied Mathematics Letters}, 9(2):55--59, 1996.

\bibitem{Hein96}
J.~Hein, T.~Jiang, L.~Wang, and K.~Zhang.
\newblock On the complexity of comparing evolutionary trees.
\newblock {\em Discrete Applied Mathematics}, 71(1--3):153--169, 1996.

\bibitem{ImpagliazzoPZ01}
R.~Impagliazzo, R.~Paturi, and F.~Zane.
\newblock Which problems have strongly exponential complexity?
\newblock {\em Journal of Computer and System Sciences}, 63(4):512--530, 2001.

\bibitem{Kao01}
M.-Y. Kao, T.~W. Lam, W.-K. Sung, and H.-F. Ting.
\newblock An even faster and more unifying algorithm for comparing trees via
  unbalanced bipartite matchings.
\newblock {\em Journal of Algorithms}, 40(2):212--233, 2001.

\bibitem{PapadimitriouY91}
C.~H. Papadimitriou and M.~Yannakakis.
\newblock Optimization, approximation, and complexity classes.
\newblock {\em Journal of Computer and System Sciences}, 43(3):425--440, 1991.

\bibitem{Pietrzak-03}
K.~Pietrzak.
\newblock On the parameterized complexity of the fixed alphabet shortest common
  supersequence and longest common subsequence problems.
\newblock {\em Journal of Computer and System Sciences}, 67(4):757--771, 2003.

\end{thebibliography}

\end{document}